\pgfplotsset{compat=1.14}
\newtheorem{thm}{Theorem}
\DeclareMathOperator{\normRed}{\oplus}
\title{Online normalizer calculation for softmax}
\author{
	Maxim Milakov\\
    NVIDIA\\
    \texttt{mmilakov@nvidia.com}
    \And
    Natalia Gimelshein\\
    NVIDIA\\
    \texttt{ngimelshein@nvidia.com}
}
\date{April 2018}
\begin{document}
\maketitle

\begin{abstract}

The Softmax function is ubiquitous in machine learning, multiple previous works suggested faster alternatives for it. In this paper we propose a way to compute classical Softmax with fewer memory accesses and hypothesize that this reduction in memory accesses should improve Softmax performance on actual hardware. The benchmarks confirm this hypothesis: Softmax accelerates by up to $1.3$x and Softmax+TopK combined and fused by up to $5$x. 

\end{abstract}

\section{Introduction}

Neural networks models are widely used for language modeling, for tasks such as machine translation \cite{2014arXiv1409.3215S} and speech recognition \cite{6296526}. These models compute word probabilities taking into account the already generated part of the sequence. The probabilities are usually computed by a Projection layer, which "projects" hidden representation into the output vocabulary space, and a following Softmax function, which transforms raw logits into the the vector of probabilities. Softmax is utilized not only for neural networks, for example, it is employed in multinomial logistic regression \cite{PolytomousLR}.

A number of previous works suggested faster alternatives to compute word probabilities. Differentiated Softmax \cite{2015arXiv151204906C} and SVD-Softmax \cite{NIPS2017_7130} replace the projection layer - which is usually just a matrix multiplication - with more computationally efficient alternatives. Multiple variants of Hierarchical Softmax \cite{conf_icassp_Goodman01, 5947611, 2016arXiv160904309G} split a single Projection+Softmax pair into multiple much smaller versions of these two functions organized in tree-like structures. Sampled-based approximations, such as Importance Sampling \cite{Bengio+Senecal-2003}, Noise Contrastive Estimation \cite{2012arXiv1206.6426M}, and Blackout \cite{2015arXiv151106909J} accelerate training by running Softmax on select elements of the original vector. Finally, Self-Normalized Softmax \cite{Devlin14fastand} augments the objective function to make the softmax normalization term close to $1$ (and skip computing it during inference).

This is not an exhaustive list, but, hopefully, a representative one. Almost all of the approaches still need to run the original Softmax function, either on full vector or reduced one. There are two exceptions that don't need to compute the softmax normalization term: training with Noise Contrastive Estimation and inference with Self-Normalized Softmax. All others will benefit from the original Softmax running faster.

To the best of our knowledge there has been no targeted efforts to improve the performance of the original Softmax function. We tried to address this shortcoming and figured out a way to compute Softmax with fewer memory accesses. We benchmarked it to see if those reductions in memory accesses translate into performance improvements on a real hardware.

\section{Original softmax}
Function \(y=Softmax(x)\) is defined as:
\begin{equation}
\label{eq:naive_softmax}
y_i=\frac{e^{x_i}}{\sum\limits_{j=1}^{V}{e^{x_j}}}
\end{equation}
where \(x, y \in \mathbb{R}^V\). The naive implementation (see algorithm~\ref{alg:naive_softmax}) scans the input vector two times - one to calculate the normalization term $d_V$ and another to compute output values $y_i$ - effectively doing three memory accesses per vector element: two loads and one store.
\begin{algorithm}[ht]
\caption{Naive softmax}
\label{alg:naive_softmax}
\begin{algorithmic}[1]
\State $d_0\gets 0$
\For{$j\gets 1, V$}
\State $d_j\gets d_{j-1}+e^{x_j}$ \label{overflow_line}
\EndFor
\For{$i\gets 1, V$}
\State $y_i\gets \frac{e^{x_i}}{d_V}$
\EndFor
\end{algorithmic}
\end{algorithm}

Unfortunately, on real hardware, where the range of numbers represented is limited, the line~\ref{overflow_line}  of the algorithm~\ref{alg:naive_softmax} can overflow or underflow due to the exponent. There is a safe form of \eqref{eq:naive_softmax}, which is immune to this problem:
\begin{equation}
\label{eq:safe_softmax}
y_i=\frac{e^{x_i-\max\limits_{k=1}^{V}{x_k}}}{\sum\limits_{j=1}^{V}{e^{x_j-\max\limits_{k=1}^{V}{x_k}}}}
\end{equation}
\begin{algorithm}[ht]
\caption{Safe softmax}
\label{alg:safe_softmax}
\begin{algorithmic}[1]
\State $m_0\gets -\infty$
\For{$k\gets 1, V$}
\State $m_k\gets \max(m_{k-1},{x_k})$
\EndFor
\State $d_0\gets 0$
\For{$j\gets 1, V$}
\State $d_j\gets d_{j-1}+e^{x_j-m_V}$
\EndFor
\For{$i\gets 1, V$}
\State $y_i\gets \frac{e^{x_i-m_V}}{d_V}$
\EndFor
\end{algorithmic}
\end{algorithm}
All major DL frameworks are using this safe version for the Softmax computation: TensorFlow \cite{tensorflow2015-whitepaper} v1.7, PyTorch \cite{paszke2017automatic} (with Caffe2) v0.4.0, MXNET \cite{mxnet_learningsys2015} v1.1.0, Microsoft Cognitive Toolkit \cite{Seide:2016:CMO:2939672.2945397} v2.5.1, and Chainer \cite{chainer_learningsys2015} v5.0.0a1. But Safe Softmax does three passes over input vector: The first one calculates the maximum value $m_V$, the second one - normalization term $d_V$, and the third one - final values $y_i$, see algorithm~\ref{alg:safe_softmax}; This results in 4 memory access per vector element overall. We want to improve on that.

\section{Online normalizer calculation}
The algorithm~\ref{alg:online_softmax} calculates both the maximum value $m$ and the normalization term $d$ in a single pass over input vector with negligible additional cost of two operations per vector element. It reduces memory accesses from 4 down to 3 per vector element for the Softmax function evaluation. Inspiration came from the numerically stable variance calculation online algorithm, see \cite{doi:10.1080/00401706.1962.10490022}.
\begin{algorithm}[ht]
\caption{Safe softmax with online normalizer calculation}
\label{alg:online_softmax}
\begin{algorithmic}[1]
\State $m_0\gets -\infty$\label{alg:start_thm_line}
\State $d_0\gets 0$
\For{$j\gets 1, V$}
\State $m_j\gets \max\left(m_{j-1},x_j\right)$\label{alg:m_line}
\State $d_j\gets d_{j-1}\times e^{m_{j-1}-m_j} +e^{x_j-m_j}$\label{alg:d_line}
\EndFor\label{alg:end_thm_line}
\For{$i\gets 1, V$}
\State $y_i\gets \frac{e^{x_i-m_V}}{d_V}$
\EndFor
\end{algorithmic}
\end{algorithm}

Essentially, the algorithm keeps the maximum value $m$ and the normalization term $d$ as it iterates over elements of the input array. At each iteration it needs to adjust the normalizer $d$ to the new maximum $m_j$ and only then add new value to the normalizer.

\begin{thm}
\label{thm:fast}
The lines~\ref{alg:start_thm_line}-\ref{alg:end_thm_line} of the algorithm~\ref{alg:online_softmax} compute $m_V=\max\limits_{k=1}^{V}{x_k}$ and $d_V=\sum_{j=1}^{V}{e^{x_j-m_V}}$ 
\end{thm}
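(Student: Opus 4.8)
The plan is to prove both claims simultaneously by induction on the loop index $j$, establishing the stronger loop invariant that after iteration $j$ we have $m_j=\max_{k=1}^{j}x_k$ and $d_j=\sum_{k=1}^{j}e^{x_k-m_j}$; the theorem is then the special case $j=V$.

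First I would dispatch the claim about $m_j$, which is essentially immediate: from line~\ref{alg:m_line} and $m_0=-\infty$, an easy induction gives $m_j=\max(m_{j-1},x_j)=\max_{k=1}^{j}x_k$. This is worth stating separately because the $d_j$ argument uses the identity $m_{j-1}\le m_j$ implicitly (through the update) and, more importantly, uses the precise value of $m_{j-1}$ versus $m_j$.

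Next I would handle the invariant for $d_j$. For the base case $j=0$, the empty sum is $0=d_0$ (with the convention that the $j=1$ step treats $d_0\,e^{m_0-m_1}$ as $0$, since $d_0=0$). For the inductive step, assume $d_{j-1}=\sum_{k=1}^{j-1}e^{x_k-m_{j-1}}$. Substituting into line~\ref{alg:d_line} and factoring $e^{m_{j-1}-m_j}$ into the sum gives
\begin{equation}
d_j=\left(\sum_{k=1}^{j-1}e^{x_k-m_{j-1}}\right)e^{m_{j-1}-m_j}+e^{x_j-m_j}=\sum_{k=1}^{j-1}e^{x_k-m_j}+e^{x_j-m_j}=\sum_{k=1}^{j}e^{x_k-m_j},
\end{equation}
where the middle equality uses $e^{x_k-m_{j-1}}e^{m_{j-1}-m_j}=e^{x_k-m_j}$. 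This closes the induction, and setting $j=V$ yields the theorem.

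There is no serious obstacle here; the only point requiring a word of care is the first iteration, where $m_0=-\infty$ makes $e^{m_0-m_1}$ degenerate — this is harmless because it is multiplied by $d_0=0$, matching the empty-sum base case, and I would note this explicitly rather than let it pass silently.
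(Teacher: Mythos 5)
Your proof is correct and follows essentially the same route as the paper: induction on the number of processed elements, with the identity $e^{x_k-m_{j-1}}e^{m_{j-1}-m_j}=e^{x_k-m_j}$ driving the inductive step for $d_j$. The only difference is cosmetic — you start the induction at the empty prefix $j=0$ and explicitly address the degenerate factor $e^{m_0-m_1}$ being absorbed by $d_0=0$, whereas the paper begins at $V=1$ and treats that first iteration as directly yielding $e^{x_1-m_1}$; your added care there is a small improvement, not a divergence.
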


\begin{proof}
We will use a proof by induction.
\begin{itemize}[label=$\lozenge$]
\item \emph{Base case}: $V=1$
\begin{flalign*}
m_1\gets &x_1 && \text{by line~\ref{alg:m_line} of the algorithm~\ref{alg:online_softmax}}\\
	=&\max\limits_{k=1}^{1}{x_k}\\
d_1\gets &e^{x_1-m_1} && \text{by line~\ref{alg:d_line} of the algorithm~\ref{alg:online_softmax}}\\
	=&\sum\nolimits_{j=1}^{1}{e^{x_j-m_1}}
\end{flalign*}
The theorem holds for $V=1$.
\item \emph{Inductive step}: We assume the theorem statement holds for $V=S-1$, that is the lines~\ref{alg:start_thm_line}-\ref{alg:end_thm_line} of the algorithm~\ref{alg:online_softmax} compute $m_{S-1}=\max\limits_{k=1}^{S-1}{x_k}$ and $d_{S-1}=\sum_{j=1}^{S-1}{e^{x_j-m_{S-1}}}$. Let's see what the algorithm computes for $V=S$\\
\begin{flalign*}
m_S\gets &\max\left(m_{S-1},x_S\right) && \text{by line~\ref{alg:m_line} of the algorithm~\ref{alg:online_softmax}}\\
	=&\max(\max\limits_{k=1}^{S-1}{x_k},x_S) && \text{by the inductive hypothesis}\\
    =&\max\limits_{k=1}^{S}{x_k} \\
d_S\gets &d_{S-1}\times e^{m_{S-1}-m_S} +e^{x_S-m_S} && \text{by line~\ref{alg:d_line} of the algorithm~\ref{alg:online_softmax}}\\
	=&\left(\sum\nolimits_{j=1}^{S-1}{e^{x_j-m_{S-1}}}\right)\times e^{m_{S-1}-m_S}+e^{x_S-m_S} && \text{by the inductive hypothesis}\\
	=&\sum\nolimits_{j=1}^{S-1}{e^{x_j-m_S}}+e^{x_S-m_S}\\
	=&\sum\nolimits_{j=1}^S{e^{x_j-m_S}}
\end{flalign*}
The inductive step holds as well.\qedhere
\end{itemize}
\end{proof}

The algorithm~\ref{alg:online_softmax} is proved to compute the Softmax function as defined in \eqref{eq:safe_softmax}. It is also safe:
\begin{itemize}
\item $m_j$ is the running maximum, $m_j \in \left[\min\limits_{k=1}^{V}{m_k},\max\limits_{k=1}^{V}{m_k}\right], \forall j\in 1,V$; $m_j$ cannot underflow or overflow.
\item $d_j$ is also bounded: $1 \leq d_j \leq j, \forall j\in 1,V$. It can be easily proven by induction. The 32-bit floating point storage for $d_j$ guarantees processing of up to $1.7*10^{37}$ elements in vector $x$ without overflow. It is a reasonably large amount, but if your vector is even larger you need to use the 64-bit floating point storage for $d_j$.
\end{itemize}

The algorithm~\ref{alg:safe_softmax} provides the same guarantees: $1 \leq d_j \leq j, \forall j\in 1,V$. 

In the remainder of this paper we will call algorithm~\ref{alg:online_softmax} "Online Softmax".

\subsection{Parallel online normalizer calculation}

The lines~\ref{alg:start_thm_line}-\ref{alg:end_thm_line} of the algorithm~\ref{alg:online_softmax} define a sequential way of calculating the normalization term in a single pass over input vector. Modern computing devices allow running multiple threads concurrently; We need to have a parallel version of the algorithm to fully utilize devices. We define a generalized version of the online normalizer calculation:

\begin{equation}
\label{eq:generalized_online}
\begin{bmatrix}m_V\\d_V\end{bmatrix} = \begin{bmatrix}x_1\\1\end{bmatrix}\normRed\begin{bmatrix}x_2\\1\end{bmatrix}\normRed ...\normRed\begin{bmatrix}x_V\\1\end{bmatrix}
\end{equation}

where \(x_i, m_V, d_V \in \mathbb{R}\). The binary operation $\normRed:\mathbb{R}^2\times \mathbb{R}^2\rightarrow\mathbb{R}^2$ is defined as:

\begin{equation}
\begin{bmatrix}m_i\\d_i\end{bmatrix}\normRed\begin{bmatrix}m_j\\d_j\end{bmatrix} = \begin{bmatrix}\max\left(m_i,m_j\right)\\d_i\times e^{m_i-\max\left(m_i,m_j\right)}+d_j\times e^{m_j-\max\left(m_i,m_j\right)}\end{bmatrix}
\end{equation}

Applying \eqref{eq:generalized_online} sequentially from left to right is equivalent to running lines~\ref{alg:start_thm_line}-\ref{alg:end_thm_line} of the algorithm~\ref{alg:online_softmax}. The operation $\normRed$ is associative, which enables parallel evaluation of \eqref{eq:generalized_online}. It is also commutative, which provides the flexibility needed to make parallel implementations more efficient. We omit the proofs for these two statements for brevity.

\section{Softmax and top-k fusion}

Online Softmax (algorithm~\ref{alg:online_softmax}) does three memory accesses per vector element: one load for the normalizer calculation, one load and one store for computing Softmax function values $y_i$. Inference with the beam search for auto-regressive models has TopK following Softmax, and this TopK doesn't need to compute all $y_i$ values. This enables even bigger improvements.

The TopK function is producing the vector of K integer indices referencing the largest values in the input vector, along with those values:
\begin{equation}
TopK\left(y\right)=(v,z): v_i=y_{z_i}, v_i\geq y_j, \forall i \in \left[1,K\right], \forall j \notin z
\end{equation}
where \(y \in \mathbb{R}^V, z \in \mathbb{Z}^K, v\in \mathbb{R}^K\).

\begin{algorithm}[ht]
\caption{Online softmax and top-k}
\label{alg:online_softmax_topk}
\begin{algorithmic}[1]
\State $m_0\gets -\infty$
\State $d_0\gets 0$
\State $u\gets \left\{-\infty,-\infty,\dots,-\infty\right\}^T, u\in\mathbb{R}^{K+1}$ \Comment The 1st $K$ elems will hold running TopK values
\State $p\gets \left\{-1,-1,\dots,-1\right\}^T, p\in \mathbb{Z}^{K+1}$ \Comment ... and their indices
\For{$j\gets 1, V$}
\State $m_j\gets \max\left(m_{j-1},x_j\right)$
\State $d_j\gets d_{j-1}\times e^{m_{j-1}-m_j} +e^{x_j-m_j}$
\State $u_{K+1}\gets x_j$ \Comment Initialize $K+1$ elem with new value from input vector
\State $p_{K+1}\gets j$ \Comment ... and its index
\State $k\gets K$\label{alg:start_partial_topk_lines} \Comment Sort $u$ in descending order, permuting $p$ accordingly. The first K elements are already sorted, so we need just a single loop, inserting the last element in the correct position.
\While{$k\geq 1 \text{ and } u_k<u_{k+1}$}
	\State swap$\left(u_k,u_{k+1}\right)$
    \State swap$\left(p_k,p_{k+1}\right)$
    \State $k\gets k-1$
\EndWhile\label{alg:end_partial_topk_lines}
\EndFor
\For{$i\gets 1, K$} \Comment The algorithm stores only K values and their indices
\State $v_i\gets \frac{e^{u_i-m_V}}{d_V}$
\State $z_i\gets p_i$
\EndFor
\end{algorithmic}
\end{algorithm}

The TopK needs to load each element of the input vector at least once. Running Safe Softmax and the TopK separately requires 5 accesses per input element and 4 accesses if we use Online Softmax instead of Safe Softmax (but still run them separately, one after another). If we improve on the algorithm~\ref{alg:online_softmax} and keep not only running values of $m$ and $d$ (when iterating over the input vector), but also the vectors of TopK input values $u$ and their indices $p$ - as in the algorithm~\ref{alg:online_softmax_topk} - we can run this Softmax+TopK fusion with just one memory access per element of the input vector.

\section{Benchmarking}
Online normalizer calculation reduces the number of memory accesses for the Softmax and Softmax+TopK functions. The softmax function has a very low flops per byte ratio; that means the memory bandwidth should be limiting the performance, even for Online Softmax with its additional few floating point operations per element. Fewer memory accesses should translate into performance improvements, and experiments confirm this.

We implemented a benchmark for GPUs using CUDA C. The benchmark utilizes \href{https://nvlabs.github.io/cub/}{CUB} v1.8.0 for fast parallel reductions. All experiments were run on NVIDIA Tesla V100 PCIe 16 GB, ECC on, persistent mode on, CUDA Toolkit 9.1. Source code of the benchmark is available at \href{https://github.com/NVIDIA/online-softmax}{github.com/NVIDIA/online-softmax}.

\subsection{Benchmarking softmax}

We benchmarked all 3 Softmax algorithms - Naive, Safe, and Online - on different vector sizes for the batch sizes of 4,000 and 10. The large batch case corresponds to the training or batch inference with enough input vectors to saturate the device and and the small batch case corresponds to online inference with too few vectors to occupy the device fully.

\begin{figure}[ht]
\centering
\begin{tikzpicture}
\pgfplotsset{set layers}
\begin{semilogxaxis} [
scale only axis,
ylabel = Performance improvement,
xmin=35,
xmax=500000,
ymin=0.65,
ymax=2,
axis x line=none,
axis y line*=right,
ybar,
bar width=1.4,
ymajorgrids = true,
grid style = dotted,
legend pos = south east,
]
\addplot[color=teal!60!white,fill=teal!30!white] table [x=V, y expr=\thisrow{OnlineSoftmax} / \thisrow{SafeSoftmax}] {benchmark_softmax_4000vectors.txt};
\addlegendentry {Online/Safe}
\end{semilogxaxis}
\begin{semilogxaxis} [
axis on top=true,
scale only axis,
scaled ticks=false,
xlabel = Vector size $V$,
ylabel = Elements per second,
ymin=0,
xmin=35,
xmax=500000,
axis x line*=bottom,
axis y line*=left,
legend pos = south west,
xmajorgrids = true,
grid style = dotted,
legend cell align={right},
cycle list={
  magenta!80!black,every mark/.append style={fill=magenta},mark=triangle*\\
  teal!80!black,every mark/.append style={fill=teal},mark=square*\\
  olive!80!black,every mark/.append style={fill=olive},mark=*\\
},
]
\addplot table [x=V, y=NaiveSoftmax] {benchmark_softmax_4000vectors.txt};
\addlegendentry {Naive}
\addplot table [x=V, y=OnlineSoftmax] {benchmark_softmax_4000vectors.txt};
\addlegendentry {Online}
\addplot table [x=V, y=SafeSoftmax] {benchmark_softmax_4000vectors.txt};
\addlegendentry {Safe}
\end{semilogxaxis}
\end{tikzpicture}
\caption{Benchmarking softmax, Tesla V100, fp32, batch size 4000 vectors}
\label{fig:softmax_4000}
\end{figure}
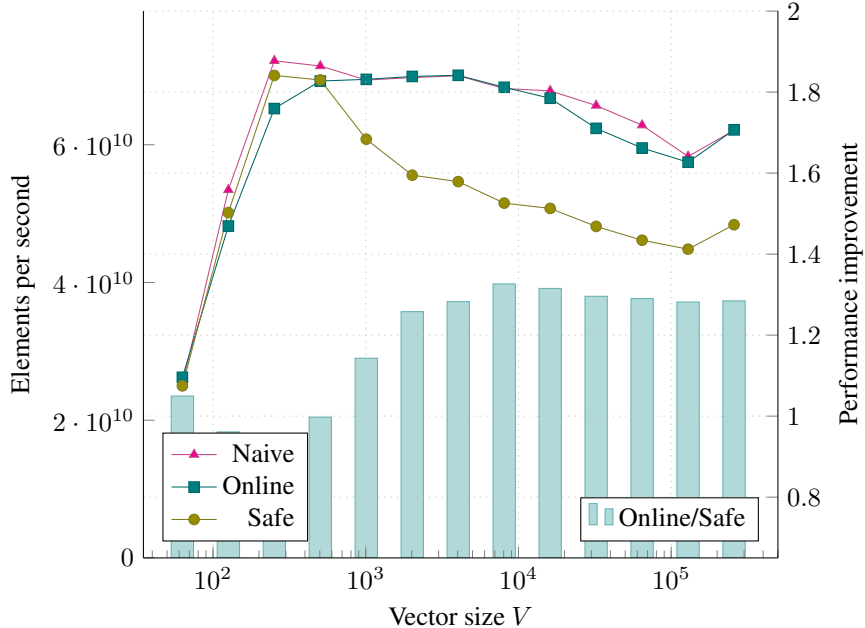

For the large batch case (see figure~\ref{fig:softmax_4000}) all three algorithms perform similarly up until $V=1000$ vector size. The NVIDIA Visual Profiler shows that at that point L1 and L2 cache thrashing starts to make all three algorithms limited by the DRAM bandwidth. When this happens Online and Naive algorithms are getting faster than Safe one, quickly achieving $\sim 1.3$x at $V=4000$ (look for bars in the chart, they are showing performance improvement of Online Softmax over Safe Softmax). This is quite close to $1.33$x reduction in memory accesses for those algorithms.

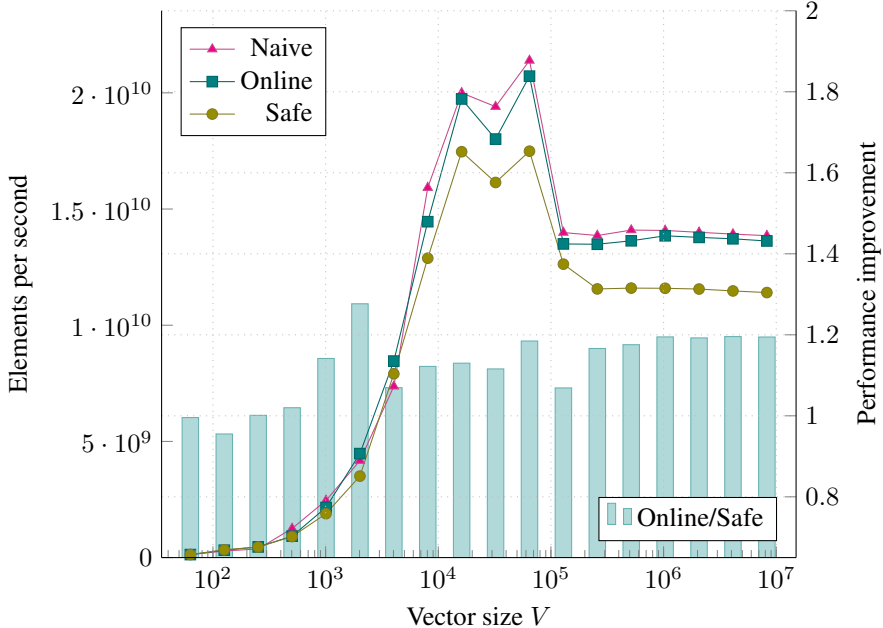
\begin{figure}[ht]
\centering
\begin{tikzpicture}
\pgfplotsset{set layers}
\begin{semilogxaxis} [
scale only axis,
ylabel = Performance improvement,
xmin=35,
xmax=15000000,
ymin=0.65,
ymax=2,
axis x line=none,
axis y line*=right,
ybar,
bar width=1.4,
ymajorgrids = true,
grid style = dotted,
legend pos = south east,
]
\addplot[color=teal!60!white,fill=teal!30!white] table [x=V, y expr=\thisrow{OnlineSoftmax} / \thisrow{SafeSoftmax}] {benchmark_softmax_10vectors.txt};
\addlegendentry {Online/Safe}
\end{semilogxaxis}
\begin{semilogxaxis} [
axis on top=true,
scale only axis,
scaled ticks=false,
xlabel = Vector size $V$,
ylabel = Elements per second,
ymin=0,
xmin=35,
xmax=15000000,
axis x line*=bottom,
axis y line*=left,
legend pos = north west,
xmajorgrids = true,
grid style = dotted,
legend cell align={right},
cycle list={
  magenta!80!black,every mark/.append style={fill=magenta},mark=triangle*\\
  teal!80!black,every mark/.append style={fill=teal},mark=square*\\
  olive!80!black,every mark/.append style={fill=olive},mark=*\\
},
]
\addplot table [x=V, y=NaiveSoftmax] {benchmark_softmax_10vectors.txt};
\addlegendentry {Naive}
\addplot table [x=V, y=OnlineSoftmax] {benchmark_softmax_10vectors.txt};
\addlegendentry {Online}
\addplot table [x=V, y=SafeSoftmax] {benchmark_softmax_10vectors.txt};
\addlegendentry {Safe}
\end{semilogxaxis}
\end{tikzpicture}
\caption{Benchmarking softmax, Tesla V100, fp32, batch size 10 vectors}
\label{fig:softmax_10}
\end{figure}

The absolute performance for small batch case is lower for all algorithms, see figure~\ref{fig:softmax_10}. The benchmark is running one threadblock per vector; thus small batch case - with 10 vectors - has just 10 threadblocks in the grid. This is not enough to saturate the GPU, both compute and the memory subsystem are underutilized, various latencies are exposed. As in the batch inference case, all three algorithms show similar performance up to $V=1000$ vector size. After that Naive and Online algorithms outperform Safe one by $\sim 1.15$x.

\subsection{Benchmarking softmax and top-k}

We benchmarked Safe Softmax followed by the TopK (running one after another), Safe Softmax fused with the TopK into a single function, and Online Softmax fused with TopK, again, for 2 cases: 4,000 and 10 vectors. We picked up $K=5$ in TopK for all runs.

\begin{figure}[ht]
\centering
\begin{tikzpicture}
\pgfplotsset{set layers}
\begin{semilogxaxis} [
scale only axis,
ylabel = Performance improvement,
xmin=35,
xmax=500000,
ymin=0,
ymax=6,
axis x line=none,
axis y line*=right,
ybar,
bar width=1.4,
ymajorgrids = true,
grid style = dotted,
legend pos = south east,
]
\addplot[color=teal!60!white,fill=teal!30!white] table [x=V, y expr=\thisrow{OnlineSoftmaxFusedTopK} / \thisrow{SafeSoftmaxUnfusedTopK}] {benchmark_softmax_4000vectors.txt};
\addlegendentry {Online fused/Safe unfused}
\end{semilogxaxis}
\begin{semilogxaxis} [
axis on top=true,
scale only axis,
scaled ticks=false,
xlabel = Vector size $V$,
ylabel = Elements per second,
ymin=0,
xmin=35,
xmax=500000,
axis x line*=bottom,
axis y line*=left,
legend pos = north west,
xmajorgrids = true,
grid style = dotted,
legend cell align={right},
cycle list={
  teal!80!black,every mark/.append style={fill=teal},mark=square*\\
  olive!80!black,every mark/.append style={fill=olive},mark=*\\
  magenta!80!black,every mark/.append style={fill=magenta},mark=triangle*\\
},
]
\addplot table [x=V, y=OnlineSoftmaxFusedTopK] {benchmark_softmax_4000vectors.txt};
\addlegendentry {Online Softmax + TopK fused}
\addplot table [x=V, y=SafeSoftmaxFusedTopK] {benchmark_softmax_4000vectors.txt};
\addlegendentry {Safe Softmax + TopK fused}
\addplot table [x=V, y=SafeSoftmaxUnfusedTopK] {benchmark_softmax_4000vectors.txt};
\addlegendentry {Safe Softmax + TopK unfused}
\end{semilogxaxis}
\end{tikzpicture}
\caption{Benchmarking softmax and top-k, Tesla V100, fp32, batch size 4000 vectors}
\label{fig:softmax_topk_4000}
\end{figure}
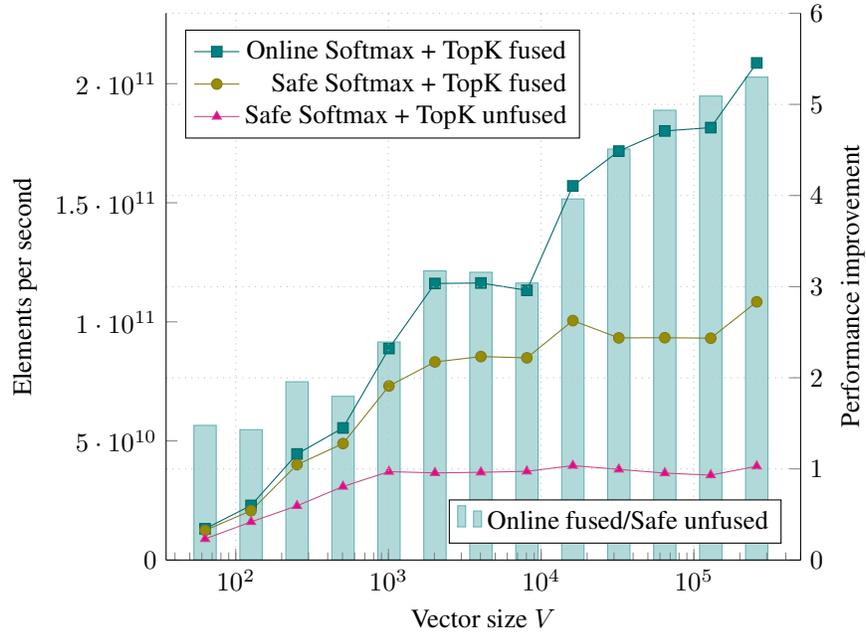

Online fused version is running considerably faster than Safe unfused one. For large batch case - see figure~\ref{fig:softmax_topk_4000} - the performance improvement starts at $1.5$x and goes up as vector size $V$ increases approaching $5$x at $V=25000$, which corresponds to $5$x reduction in memory accesses. This $5$x comes from $2.5$x due to function fusion and $2$x due to Online Softmax itself.

\begin{figure}[ht]
\centering
\begin{tikzpicture}
\pgfplotsset{set layers}
\begin{semilogxaxis} [
scale only axis,
ylabel = Performance improvement,
xmin=35,
xmax=15000000,
ymin=0,
ymax=4,
axis x line=none,
axis y line*=right,
ybar,
bar width=1.4,
ymajorgrids = true,
grid style = dotted,
legend pos = south east,
]
\addplot[color=teal!60!white,fill=teal!30!white] table [x=V, y expr=\thisrow{OnlineSoftmaxFusedTopK} / \thisrow{SafeSoftmaxUnfusedTopK}] {benchmark_softmax_10vectors.txt};
\addlegendentry {Online fused/Safe unfused}
\end{semilogxaxis}
\begin{semilogxaxis} [
axis on top=true,
scale only axis,
scaled ticks=false,
xlabel = Vector size $V$,
ylabel = Elements per second,
ymin=0,
xmin=35,
xmax=15000000,
axis x line*=bottom,
axis y line*=left,
legend pos = north west,
xmajorgrids = true,
grid style = dotted,
legend cell align={right},
legend style={cells={align=right}},
cycle list={
  teal!80!black,every mark/.append style={fill=teal},mark=square*\\
  olive!80!black,every mark/.append style={fill=olive},mark=*\\
  magenta!80!black,every mark/.append style={fill=magenta},mark=triangle*\\
},
]
\addplot table [x=V, y=OnlineSoftmaxFusedTopK] {benchmark_softmax_10vectors.txt};
\addlegendentry {Online Softmax +\\ TopK fused}
\addplot table [x=V, y=SafeSoftmaxFusedTopK] {benchmark_softmax_10vectors.txt};
\addlegendentry {Safe Softmax +\\ TopK fused}
\addplot table [x=V, y=SafeSoftmaxUnfusedTopK] {benchmark_softmax_10vectors.txt};
\addlegendentry {Safe Softmax +\\ TopK unfused}
\end{semilogxaxis}
\end{tikzpicture}
\caption{Benchmarking softmax and top-k, Tesla V100, fp32, batch size 10 vectors}
\label{fig:softmax_topk_10}
\end{figure}
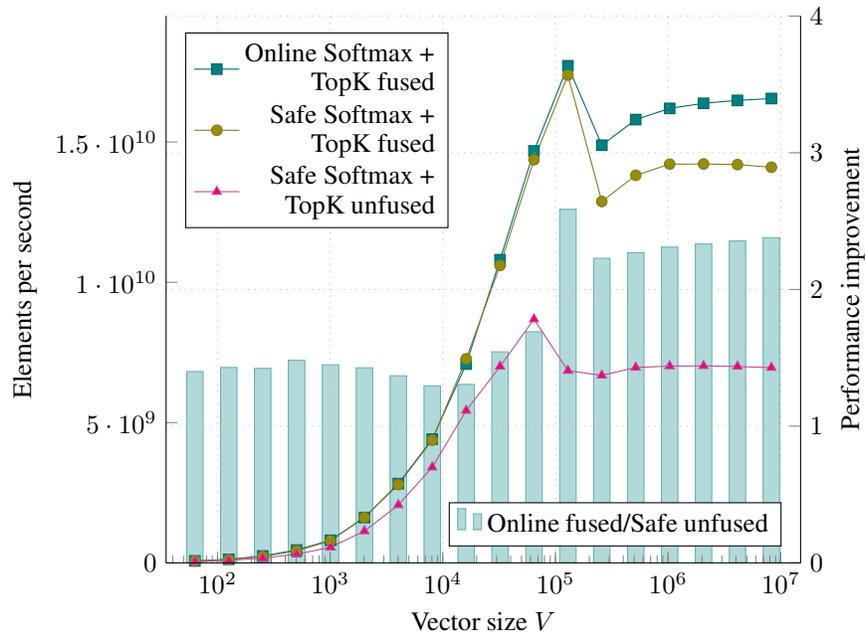

In the small batch case (see figure~\ref{fig:softmax_topk_10}) Online fused version outperforms Safe unfused one by $1.5$x-$2.5$x. It cannot achieve $5$x because the GPU is underutilized and the performance is limited not by the memory bandwidth, but by various latencies. Yet the reduction in memory accesses helps even in this latency limited case. In small batch case fusion only already brings substantial performance improvements, switching to Online Softmax helps improve performance even further.

The benchmark shows these levels of performance improvement for relatively small $K$ only. The cost of keeping partial TopK results - as in the lines~\ref{alg:start_partial_topk_lines}-\ref{alg:end_partial_topk_lines} of the algorithm~\ref{alg:online_softmax_topk} - increases quickly as $K$ gets bigger: the performance improvement drops to $3.5$x for $K=10$, $2$x for $K=15$, $1.4$x for $K=30$, and degrades further for bigger $K$s. For these cases the TopK is dominating (in terms of runtime) over the Softmax. Getting rid of separate Softmax and fusing the normalization term calculation into the TopK is still beneficial, but the value goes down as TopK is taking more and more time.

\section{Results}

We introduced the way to calculate the normalizer for the Softmax function in a single pass over input data, which reduces memory accesses by $1.33$x for the Softmax function alone. Benchmarks on Tesla V100 show that this materializes in $1.15$x performance improvements for $V\geq 1000$ vector sizes, and for the large batch mode it goes up to $1.3$x when $V\geq 4000$.

If one is using Naive Softmax then switching to Online version improves numerical accuracy with no performance hit or a negligible one.

When the TopK follows the Softmax the new single-pass normalizer calculation enables efficient fusion of these 2 functions resulting in $5$x fewer memory accesses for Softmax+TopK combined. We observed $1.5$x-$5$x performance improvement on Tesla V100, with this $5$x improvement coming from $2.5$x with fusion and $2$x with Online Softmax itself.

These performance improvements could be applied not only to the classical Softmax function; They are orthogonal to many other Softmax optimization techniques including Hierarchical Softmax, Importance Sampling, and SVD-Softmax.

\section{Discussion}

Online Softmax is running up to $1.3$x faster on the latest generation GPU than the one used by major DL frameworks. It also enables very efficient fusion of the Softmax with following TopK showing up to $5$x performance improvement over the traditional Safe Softmax and TopK running separately.

Could we see significantly different speed-ups or even slow-downs on different compute devices, for example CPUs? We didn't do experiments for those, but if the original code is vectorized and one manages to keep it vectorized for the online normalizer (and partial TopK) calculation then similar speedups could probably be expected.

There could be a way to improve the performance further. The resulting Softmax and even Softmax+TopK fused are still limited by the memory bandwidth, so fusing them with the preceding layer will avoid memory round trip, thus improving performance. This change is more challenging though.

\subsubsection*{Acknowledgments}

We would like to thank Christoph Angerer for his valuable comments and suggestions.

\bibliographystyle{hunsrtnat}
\bibliography{main}

\end{document}